\newtheorem{theorem}{Theorem}[section]
\newcommand{\qed}{\hfill $\Box$ \bigbreak}
\newenvironment{proof}{\noindent {\bf Proof.}}{\qed}
\newcommand{\remove}[1]{}
\newcommand{\inst}[1]{$^{#1}$}
\begin{document}

\baselineskip  0.2in %  0.2in %0.18in si on veut compact
\parskip     0.05in %    0.1in % 0.0in  pour compacter
\parindent   0.0in %    0.0in % 0.3in pour voir les paragraphes

\title{{\bf Deterministic meeting of sniffing agents in the plane}\footnote{A preliminary version of this paper appeared in the Proc. 23rd International Colloquium
on Structural Information and Communication Complexity (SIROCCO 2016), LNCS 9988.}}

\author{
Samir Elouasbi\inst{1},
Andrzej Pelc\inst{1}$^,$\footnote{Partially supported by NSERC discovery grant and by the Research Chair in Distributed Computing at the Universit\'e du Qu\'{e}bec en Outaouais.}\\
\inst{1} Universit\'{e} du Qu\'{e}bec en Outaouais, Gatineau, Canada.\\
E-mails: \url{elos02@uqo.ca}, \url{pelc@uqo.ca}\\
}

\date{ }
\maketitle

\begin{abstract}

Two mobile agents,
starting at arbitrary, possibly different times from arbitrary locations in the plane, have to meet.
Agents are modeled as discs of diameter 1, and meeting occurs when these discs touch.
Agents have different labels which are integers from the set $\{0,\dots,L-1\}$. Each agent knows $L$ and knows its own label, but not the label of the other agent.
Agents are equipped with compasses and have synchronized clocks. They make a series of moves. Each move specifies the direction and the duration of moving.
This includes a $null$ move which consists in staying inert for some time, or forever.
In a non-null move agents travel at the same constant speed, normalized to 1.

We assume that agents have sensors enabling them to estimate the distance from the other agent (defined as the distance between centers of discs), but not the direction towards it. We consider two models of estimation.
In both models an agent reads its sensor at the moment of its appearance in the plane and then at the end of each move. This reading (together with the previous ones) determines the decision concerning the next move.
In both models the reading of the sensor tells the agent if the other agent is already present. Moreover,
in the {\em monotone model}, each agent can find out, for any two readings in moments $t_1$ and $t_2$, whether the distance from the other agent at time $t_1$
was smaller, equal or larger than at time $t_2$. In the weaker {\em binary model}, each agent can find out, at any reading, whether it is at distance less than $\rho$ or at distance at least $\rho$ from the other agent,
for some real $\rho>1$ unknown to them. Such distance estimation mechanism can be implemented, e.g., using chemical sensors. Each agent emits some chemical substance (scent), and the sensor
of the other agent detects it, i.e., {\em sniffs}. The intensity of the scent decreases with the distance. In the monotone model it is assumed that the sensor is ideally accurate and
can measure any change of intensity. In the binary model it is only assumed that the sensor can detect the scent below some distance (without being able to measure intensity) above which
the scent is too weak to be detected.

We show the impact of the two ways of sensing on the time of meeting, measured from the start of the later agent.
For the monotone model we show an algorithm achieving meeting in time $O(D)$, where $D$ is the initial distance between the agents.
This complexity is optimal. For the binary model we show that, if agents start at distance smaller than $\rho$ (i.e., when they sense each other initially) then
meeting can be guaranteed within time $O(\rho\log L)$, and that this time cannot be improved in general.
Finally we observe that,  if agents start at distance $\alpha\rho$, for some constant $\alpha >1$ in the binary model, then sniffing does not help, i.e., the worst-case optimal meeting time is of the same
order of magnitude as without any sniffing ability.

\noindent {\bf Keywords:} algorithm, rendezvous, mobile agent, synchronous, deterministic, plane, distance.
\end{abstract}

\vfill

\thispagestyle{empty}
\setcounter{page}{0}
\pagebreak

%%%%%%%%%%%%%%%%%%%%%%%%%%%%%%%%%%%%%%%%%%%%%%%%%%%%%%%%%%%
\section{Introduction}
%%%%%%%%%%%%%%%%%%%%%%%%%%%%%%%%%%%%%%%%%%%%%%%%%%%%%%%%%%%
{\bf The background and the problem.}
Two mobile agents,
starting at arbitrary, possibly different times from arbitrary locations in the plane, have to meet.
Agents are modeled as discs of diameter 1, and meeting occurs when these discs touch (i.e., the centers of the agents get at distance 1). This way of formulating the meeting problem in the plane is equivalent to the problem of {\em approach} \cite{DP},
where agents are modeled as points moving in the plane, and the approach is defined as these points getting at distance at most 1 from each other. This is one of the versions of the well-known rendezvous problem
in which two or more agents have to meet in some environment. This problem has been studied in many variations: in the plane and in networks, in the synchronous vs. asynchronous setting, and using deterministic vs. randomized algorithms. In applications, mobile agents may be rescuers trying to find a lost tourist in the mountains, animals trying to find a mate, or mobile robots that have to meet in order to compare previously
collected data.

We are interested in deterministic algorithms for the task of meeting. If agents were anonymous (identical), then, if started simultaneously, they would trace identical trajectories and hence could never meet.
In order to break the symmetry, we assume that agents have different labels which are integers from the set $\{0,\dots,L-1\}$. Each agent knows $L$ and knows its own label which it can use as a parameter in the algorithm that they both execute, but it does not know the label of the other agent.

Agents are equipped with compasses showing the cardinal directions, and have synchronized clocks.
The adversary places each agent at some point of the plane at possibly different times.
The clock of the agent starts at the moment of its appearance in the plane.
Each agent makes a series of moves. A move specifies the direction and the duration of moving.
This includes a $null$ move which consists in staying inert for some time, or forever.
In a non-null move agents travel at the same constant speed, normalized to 1.

We assume that agents have sensors enabling them to estimate the distance from the other agent (defined as the distance between centers of discs), but not the direction towards it. We consider two models of estimation.
In both models an agent reads its sensor at the moment of its appearance in the plane and then at the end of each move. This reading (together with the previous ones) determines the decision concerning the next move.
In both models the reading of the sensor tells the agent if the other agent is already present. Moreover,
in the {\em monotone model} each agent can find out, for any two readings in moments $t_1$ and $t_2$, whether the distance from the other agent at time $t_1$
was smaller, equal or larger than at time $t_2$. In the weaker {\em binary model} each agent can find out, at any reading, whether it is at distance less than $\rho$ or at distance at least $\rho$ from the other agent,
for some real $\rho>1$ unknown to them. (We assume that $\rho>1$ because agents are always at distance larger than 1 before touching, hence $\rho \leq 1$ would be useless for sensing.)   Such distance estimation mechanism can be implemented, e.g., using chemical sensors. Each agent emits some chemical substance (scent), and the sensor
of the other agent detects it, i.e., {\em sniffs}. If  at some time the agent is still alone in the plane, the reading of its sensor is 0.
Otherwise, the reading of the sensor is positive in the monotone model, and in the binary model it is 1 if the distance between the agents is less than $\rho$ and 0 if it is at least $\rho$.
The intensity of the scent decreases with the distance. In the monotone model it is assumed that the sensor is ideally accurate: it
can measure any change of scent intensity and hence can compare distances at any two readings. (The name {\em monotone} comes from the fact that the intensity of the scent accurately sensed by the agent
is a strictly decreasing function of the distance. We do not assume anything else about this function: an agent cannot learn, e.g., the value of its distance to the other agent.) In the binary model it is only assumed that the sensor can detect the scent below some distance (without being able to measure its intensity) above which
the scent is to weak to be detected.

Note that the monotone model is similar to the model of distance-aware agents from \cite{DDKU}. The differences are the environment (networks in the case of  \cite{DDKU} vs. the plane with the use of compasses in our case) and sensing
after each round in  \cite{DDKU} vs. sensing at times decided by the agent in our case. These differences are important and will lead to a more efficient algorithm in our setting.

%--------------------------------------------------
{\bf Our results.}
We show the impact of the two ways of sensing on the time of meeting, measured from the start of the later agent.
For the monotone model we show an algorithm achieving meeting in time $O(D)$, where $D$ is the initial distance between the agents.
This complexity is optimal.
For the binary model we show that, if agents start at distance smaller than $\rho$ (i.e., when they sense each other initially) then
meeting can be guaranteed within time $O(\rho\log L)$, and that this time cannot be improved in general.
Indeed we show that, for some initial distance less than $\rho$, and for some labels of the agents, time $\Omega(\rho\log L)$ is needed to meet in the binary model.
Finally we observe that  if agents start at distance $\alpha\rho$, for some constant $\alpha >1$ in the binary model, then sniffing does not help, i.e., the worst-case optimal meeting time is of the same
order of magnitude as without any sniffing ability.

Our results show a separation between the two models of sensing accuracy. Suppose that agents start at distance $\rho/3$. Then in the monotone model the optimal meeting time is $\Theta(\rho)$, while
in the binary model it is $\Theta(\rho\log L)$.

%--------------------------------------------------
{\bf Related work.}
The large literature on rendezvous can be classified according to the mode in which agents move (deterministic or randomized) and
the environment where they move (a network modeled as a graph or a terrain in the plane).
An extensive survey of  randomized rendezvous in various scenarios  can be found in
\cite{alpern02b}, cf. also  \cite{anderson98b,KKPM08}.
%Several authors
%considered the geometric scenario (rendezvous in an interval of the real line, see, e.g.,  \cite{baston01,gal99},

Deterministic rendezvous in networks was surveyed in \cite{P}.
In this setting a lot of effort has been dedicated to the study of the feasibility of rendezvous, and to the time required to achieve this task, when feasible, under the synchronous scenario.
For instance, deterministic rendezvous with agents equipped with tokens used to mark nodes was considered, e.g., in~\cite{KKSS}. Time of deterministic rendezvous of agents equipped with unique labels was discussed in \cite{DFKP,TSZ07}. Memory required by the agents to achieve deterministic rendezvous was studied in \cite{BIOKM,FP2} for trees and in  \cite{CKP} for general graphs.
Fault-tolerant rendezvous was studied, e.g., in \cite{CDLP}. In \cite{MP} the authors studied tradeoffs between the time of rendezvous and the total number
of edge traversals by both agents until the meeting. In \cite{DDKU} the authors considered distance-aware agents operating in networks. As mentioned above, this is a model similar to our monotone model.
They showed a rendezvous algorithm polynomial in local parameters of the problem (initial distance between agents, maximum degree and length of the shorter label). They also established a lower bound
for this time which exceeds $\Theta(D)$. This shows a separation between our setting and theirs.

Other works have been devoted to asynchronous rendezvous in networks, cf. e.g., \cite{BCGIL,DPV}, when the agent chooses the edge which it decides to traverse but the adversary controls the speed of the agent. Under this assumption rendezvous
in a node cannot be guaranteed even in very simple graphs, and hence the rendezvous requirement is relaxed to permit the agents to meet inside an edge.

Rendezvous of two or more agents in the plane has been mainly considered in two different settings. In one of them, cf. e.g.,  \cite{CFPS,CGP,fpsw,FSVY}, agents can see the positions
of other agents, and make the decisions based on these observations, usually in an asynchronous way. Another scenario does not allow agents to make any observations.
In \cite{DP} the authors proposed an algorithm for the asynchronous version of the problem of approach in the plane (equivalent to our meeting),
with cost polynomial in the initial distance and in the length of the smaller label.
The results from \cite{BCGIL} are for asynchronous rendezvous in the grid but imply solutions for the approach problem in the plane as well. They use a strong assumption of knowledge of initial positions of the agents in some global system of coordinates, but achieve approach at cost $O(D^2 polylog (D))$, where $D$ is the initial distance.

\section{Terminology and preliminaries}

The direction North -- South is called the {\em vertical} direction, and the direction East -- West is called the {\em horizontal} direction.
We say that agent $a$ is North of agent $b$, if the horizontal line containing the center of agent $a$ is North of the horizontal line containing the center of agent $b$.
The three other expressions (``South of'', ``East of'' and ``West of'') have analogous meaning. We say that agent $a$ is at distance $x$ from $b$ in the vertical direction, if
the horizontal lines containing the centers of the agents are at distance $x$. The distance in the horizontal direction is defined similarly.

Let $(a_1 \dots a_m)$ be the binary representation of the label $\ell$ of an agent. The {\em transformed label} $T(\ell)$ is obtained
by padding the string $(a_1 \dots a_m)$ by a prefix of $\lambda-m$ zeroes, where $\lambda =\lceil \log L \rceil$.
Hence every transformed label has length $\lambda$.
Notice that if the labels are different, then there exists an index for which the corresponding bits of their transformed labels differ (this is not necessarily the case for the binary representations of the original labels, since one of them
might be a prefix of the other). Moreover, if $\ell_1 < \ell_2$, then  $T(\ell_1)$ is lexicographically smaller than $T(\ell_2)$.

%Let $(a_1 \dots a_m)$ be the binary representation of the label $\ell$ of an agent. The {\em transformed label} $T(\ell)$ is obtained
%by replacing each bit 1 by $(10)$, each bit 0 by $(01)$ and appending
%$(11)$ at the end. Hence the transformed label is
%$T(\ell)=(c_1,\dots ,c_{2m+2})$, where\\
%$c_i=1$, for $i \in \{2m+1,2m+2\}$, and,
%for $j=1,\dots , m$:\\
%$c_{2j-1}=1$ and $c_{2j}=0$, if $a_j=1$,\\
%$c_{2j-1}=0$ and $c_{2j}=1$, if $a_j=0$.

\section{The monotone model}

In this section we present an algorithm that accomplishes the meeting in the monotone model in time $O(D)$, where $D$ is the initial distance between the agents.
This is of course optimal, as time $(D-1)/2$ is a lower bound, because the speed of the agents is 1.

The elementary instructions in the monotone model are $read(C)$ and $move(card, x)$. The instruction $read(C)$ results in reading the current  value of the scent intensity sensor into the variable $C$.
Recall that the value of the sensor is 0 if the agent is alone in the plane, and it is some positive real otherwise. The instruction  $move(card, x)$, where $card$ is one of the cardinal directions ($N,E,S,W$)
and $x$ is a positive real, results in moving the agent in the direction $card$ during time $x$. Since the speed of the agent is 1, this means that the agent travels distance $x$ in direction $card$.

At a high level, the idea of the algorithm is the following. In the beginning the agent reads its sensor. If its value is 0, this means that the other agent is not yet in the plane, which enables the agent to break symmetry.
The agent stays inert forever and will be eventually found by the other agent. The other agent must realize that the first agent is inert  and find it by first getting at distance at most 1 in the vertical direction and then getting at distance
at most 1 in the horizontal direction, which results in meeting.

If the initial readings of sensors of both agents are positive, this means that both of them are placed in the plane simultaneously. In this case, the only way to break symmetry is using the (transformed) labels of the agents,
which are different by assumption. The agents must realize that they are in this more difficult situation, and then approach, first in the vertical and then in the horizontal direction.
This is done by moving North and South for the vertical approach (and East and West for the horizontal approach) according to the bits of the transformed label of the agent. At the first bit  where their transformed labels
differ, the symmetry between the agents will be broken, they will realize it by reading their sensors, and then accomplish the approach. In order to keep the time $O(D)$ (and not $O(D+ \log L)$) the moves of the agents corresponding to consecutive bits shrink by a factor of 2 at each consecutive bit.

We now proceed to a detailed description of the algorithm.
We will use the following elementary procedures. The first of them compares the previous reading of the sensor with the current one, and assigns the result of the comparison to the variable $compare$.

\vspace*{0.5cm}

\begin{center}
\fbox{
\begin{minipage}{7cm}

{\bf Procedure} {\tt Test}\\

$P \leftarrow C$\\
$read(C)$\\
{\bf if} $P<C$ {\bf then} $compare \leftarrow larger$\\
{\bf if} $P=C$ {\bf then} $compare \leftarrow equal$\\
{\bf if} $P>C$ {\bf then} $compare \leftarrow smaller$

\end{minipage}
}
\end{center}

\vspace*{0.5cm}

The aim of the next procedure is to approach the other agent either in the vertical or in the horizontal direction by walking in steps of prescribed length $x$.
%We describe it for the vertical case, the horizontal case is similar.
It is used when the agent already realized that the other agent is  North (resp. South) of it, or it is East (resp. West) of it.
We formulate the procedure for the parameter $card$ which can be equal to $N$, $E$, $S$, or $W$.

\vspace*{0.5cm}

\begin{center}
\fbox{
\begin{minipage}{7cm}

{\bf Procedure} {\tt GetCloser} $(card,x)$\\

{\bf while} $compare = smaller$ {\bf do}\\
\hspace*{1cm}$move(card,x)$\\
\hspace*{1cm}{\tt Test}

\end{minipage}
}
\end{center}

\vspace*{0.5cm}

Our next procedure is called in the case when agents appear simultaneously in the plane, and its aim is to break symmetry between them in this case.
This is done by having the value of the variable $compare$ in Procedure {\tt Test} become different from $equal$. This occurs when agents process the bit
of their transformed labels in which they differ.
Note the two attempts at  getting the value of the variable $compare$ different from $equal$. This is necessary in the special case when
transformed labels of the agents differ in only one bit, say with index $j$, the distance in the vertical direction between the agents is $\frac{1}{2^j}$,
and the agent whose $j$th bit is 1 is South of the agent whose $j$th bit is 0. In this special case, a single attempt at breaking symmetry would go unnoticed:
agents would switch positions in the vertical direction and the value of variable $compare$ would remain $equal$. The procedure is executed by an agent with label $\ell$.

\begin{center}
\fbox{
\begin{minipage}{7cm}

{\bf Procedure} {\tt Dance}\\

$T(\ell) \leftarrow (c_1\dots c_{\lambda})$\\
$i \leftarrow 1$\\
{\bf while} $compare =equal$ {\bf do}\\
\hspace*{1cm}{\bf if} $c_i=1$ {\bf then}\\
\hspace*{2cm}$move(N,\frac{1}{2^i})$\\
\hspace*{2cm}{\tt Test}\\
\hspace*{2cm}{\bf if} $compare =equal$ {\bf then}\\
\hspace*{3cm}$move(N,\frac{1}{2^i})$\\
\hspace*{3cm}{\tt Test}\\
\hspace*{1cm}{\bf else}\\
\hspace*{2cm}$move(S,\frac{1}{2^i})$\\
\hspace*{2cm}{\tt Test}\\
\hspace*{2cm}{\bf if} $compare =equal$ {\bf then}\\
\hspace*{3cm}$move(S,\frac{1}{2^i})$\\
\hspace*{3cm}{\tt Test}\\
\hspace*{1cm}$i \leftarrow i+1$

\end{minipage}
}
\end{center}

We now describe two main procedures of our algorithm which will be called one after another. The aim of the first of them is to get the agents at distance at most 1 in the vertical direction and the aim of the second is to
get the agents at distance at most 1 in the horizontal direction. In Procedure {\tt VerticalApproach}, the later agent, or both agents if they start simultaneously, first realize which of these two cases occurs.
This is done as follows. The agent, call it $a$, moves North by 1. If it decreased the distance from the other agent, it learns that the other agent is inert, and then approaches it by steps of length 1/2,
getting at distance at most 1 in the vertical direction. If it increased the distance, it also learns that the other agent is inert, goes back (i.e. South by 1), and then approaches the other agent by steps of length 1/2,
getting at distance at most 1 in the vertical direction. If the distance did not change, the situation is still unclear: the start may have been simultaneous and the other agent also moved North by 1, or the other agent
may be inert and was at vertical distance 1/2 North before the move of agent $a$. Agent $a$ clarifies this by moving North by 1 again. Agent $a$ could not decrease its distance from $b$ after this move.
If it increased the distance, it goes back and approaches agent $b$ as before.
On the other hand, if the distance did not change again, agent $a$ learns that it is in the simultaneous start situation. It then performs Procedure {\tt Dance}, at the end of which the value of the variable $compare$ is different from $equal$. The $j$-th bit whose processing caused this change is the first bit where the transformed labels of the agents differ. This breaks symmetry. There are two cases. If, at the end of {\tt Dance}, $compare = smaller$, then the agent whose  $j$-th bit  is 1 was South of the other agent {\em before} the last move.
The agents backtrack by a distance of $\frac{1}{2^j}$ and then approach each other: the agent whose $j$-th bit  is 1 going North, and the agent whose $j$-th bit  is 0 going South. If, at the end of {\tt Dance}, $compare = larger$,
then the agent whose  $j$-th bit  is 1 is North of the other agent {\em after} the last move. Now there is no need of backtracking: the agents simply
 approach each other: the agent whose $j$-th bit  is 1 going South, and the agent whose $j$-th bit  is 0 going North.
In both cases, the steps during approach (Procedure {\tt GetCloser}) are now of length 1/4 instead of 1/2, because agents perform approach simultaneously.

\begin{center}
\fbox{
\begin{minipage}{10cm}

{\bf Procedure} {\tt VerticalApproach}\\

$sim \leftarrow$ {\em false}\\
$move(N,1)$\\
{\tt Test}\\
{\bf if} $compare=smaller$ {\bf then} {\tt GetCloser} $(N,\frac{1}{2})$\\
{\bf else}\\
\hspace*{1cm}{\bf if} $compare=larger$ {\bf then}\\
\hspace*{2cm}$move(S,1)$\\
\hspace*{2cm}{\tt GetCloser} $(S,\frac{1}{2})$\\
\hspace*{1cm}{\bf else}\\
%\hspace*{2cm}{\bf if} $compare=equal$ {\bf then}\\
\hspace*{2cm}$move(N,1)$\\
\hspace*{2cm}{\tt Test}\\
\hspace*{2cm}{\bf if} $compare=larger$ {\bf then}\\
%\hspace*{2cm}{\bf else}\\
%\hspace*{3cm}{\bf if} $compare=larger$ {\bf then}\\
\hspace*{3cm}$move(S,1)$\\
\hspace*{3cm}{\tt GetCloser} $(S,\frac{1}{2})$\\
\hspace*{2cm}{\bf else} (*$compare=equal$*)\\
\hspace*{3cm}$sim \leftarrow true$\\
\hspace*{3cm}{\tt Dance}\\
\hspace*{3cm}$j \leftarrow i-1$\\
\hspace*{3cm}{\bf if} $compare=smaller$ {\bf then}\\
\hspace*{4cm}{\bf if} $c_j=1$ {\bf then}\\
\hspace*{5cm}$move(S,\frac{1}{2^j})$\\
\hspace*{5cm}{\tt GetCloser} $(N,\frac{1}{4})$\\
\hspace*{4cm}{\bf else}\\
\hspace*{5cm}$move(N,\frac{1}{2^j})$\\
\hspace*{5cm}{\tt GetCloser} $(S,\frac{1}{4})$\\
\hspace*{3cm}{\bf else} (* $compare=larger$ *)\\
\hspace*{4cm}{\bf if} $c_j=1$ {\bf then}\\
\hspace*{5cm}{\tt GetCloser} $(S,\frac{1}{4})$\\
\hspace*{4cm}{\bf else}\\
\hspace*{5cm}{\tt GetCloser} $(N,\frac{1}{4})$

\end{minipage}
}
\end{center}

The following Procedure {\tt HorizontalApproach} will be called after Procedure {\tt VerticalApproach}, and  uses global variables $sim$ and $j$ whose values are set in the above procedure.
Procedure {\tt HorizontalApproach} is simpler because, when it is called, the agent has two important pieces of information. First, it already knows that its vertical distance from the other agent is at most 1, and hence agents
cannot pass each other horizontally without meeting. Second, the agent knows if the start was simultaneous, or if the other agent started first and hence is inert. This information is coded in the boolean variable $sim$,
which is set to $true$ in Procedure {\tt VerticalApproach} if and only if the start was simultaneous. Moreover, if the start was simultaneous, the agent knows already the first bit in which its transformed label differs from that of the other agent: the index of this bit is $j$. Hence, if $sim =$ {\em false}, the agent approaches the other (inert) agent similarly as before, and if $sim=true$, then agents approach each other horizontally by going either East or West,
depending on the value of the $j$th bit of their transformed label, until meeting.

\begin{center}
\fbox{
\begin{minipage}{10cm}

{\bf Procedure} {\tt HorizontalApproach}\\

{\bf if} $sim =$ {\em false} {\bf then}\\
\hspace*{1cm}$move(E,1)$\\
\hspace*{1cm}{\tt Test}\\
\hspace*{1cm}{\bf if} $compare=smaller$ {\bf then} {\tt GetCloser} $(E,1$)\\
\hspace*{1cm}{\bf else}\\
\hspace*{2cm}$move(W,1)$\\
\hspace*{2cm}{\tt GetCloser} $(W,1$)\\
{\bf else}\\
\hspace*{1cm}{\bf if} $c_j=1$ {\bf then}\\
\hspace*{2cm}$move(E,1)$\\
\hspace*{2cm}{\tt Test}\\
\hspace*{2cm}{\bf if} $compare=smaller$ {\bf then} {\tt GetCloser} $(E,1$)\\
\hspace*{2cm}{\bf else}\\
\hspace*{3cm}$move(W,1)$\\
\hspace*{3cm}{\tt GetCloser} $(W,1$)\\
\hspace*{1cm}{\bf else}\\
\hspace*{2cm}$move(W,1)$\\
\hspace*{2cm}{\tt Test}\\
\hspace*{2cm}{\bf if} $compare=smaller$ {\bf then} {\tt GetCloser} $(W,1$)\\
\hspace*{2cm}{\bf else}\\
\hspace*{3cm}$move(E,1)$\\
\hspace*{3cm}{\tt GetCloser} $(E,1$)

\end{minipage}
}
\end{center}

Now our main algorithm for the monotone model can be formulated succinctly as follows.
It is executed by an agent with  label $\ell$ (that intervenes in Procedure {\tt Dance}), with the understanding that when agents touch (i.e., get at distance 1), the algorithm is interrupted, as meeting is then accomplished.
We will prove that this will always occur by the end of the execution of the algorithm.

\begin{center}
\fbox{
\begin{minipage}{7cm}

{\bf Algorithm} {\tt MeetingWithPreciseSensor}\\

$read(C)$\\
{\bf if} $C=0$ {\bf then} stay inert forever\\
{\bf else}\\
\hspace*{1cm} {\tt VerticalApproach}\\
\hspace*{1cm} {\tt HorizontalApproach}

\end{minipage}
}
\end{center}

\vspace*{0.5cm}

\begin{theorem}
The meeting of two agents that are arbitrarily placed in the plane and execute Algorithm {\tt MeetingWithPreciseSensor} in the monotone model, occurs by the end of the execution of this algorithm.
If agents are at initial distance $D$, then the meeting occurs in time $O(D)$ after the appearance of the later agent.
\end{theorem}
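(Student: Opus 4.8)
Each agent's \emph{first} sensor reading is positive if and only if the other agent is already present, so it lets an agent tell the two regimes apart: either both agents appeared at the same time, or one appeared strictly before the other. An agent whose first reading equals $0$ is the strictly earlier one and, by the algorithm, stays inert forever; the other agent then reads a positive value and runs \texttt{VerticalApproach} followed by \texttt{HorizontalApproach}. I would therefore argue separately about the \emph{inert--target case} and the \emph{simultaneous case}, and in each case proceed in three steps: (1) \texttt{VerticalApproach} terminates with the executing agent(s) at vertical distance $<1$ from the other agent; (2) \texttt{HorizontalApproach} then terminates with a meeting (the disc of the executing agent touches the other disc); (3) the total distance travelled by each agent is $O(D)$, so, since the speed is $1$, the time elapsed since the later agent's appearance is $O(D)$. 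The matching lower bound $(D-1)/2$ is already noted before the statement, so step (3) also gives optimality.

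\textbf{Inert target.} The basic fact here is that, for a stationary partner, a move of length $\ell$ in a fixed cardinal direction strictly decreases, strictly increases, or exactly preserves the Euclidean distance according to a simple threshold on the relevant coordinate of the moving agent, and \emph{the sign of this change is precisely what the sensor records in the variable} \texttt{compare}. Hence the outcome of the first \texttt{move}$(N,1)$ tells the agent whether the partner lies (sufficiently far) North or South, and it then moves that way inside \texttt{GetCloser} with step $1/2$; every such step strictly decreases the Euclidean distance. This yields at once (i) termination, since a strictly decreasing sequence of distances cannot cycle, and (ii) a bound of $O(D)$ on the number of steps, hence on the travel inside \texttt{GetCloser}. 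A short computation with step length $1/2$ shows the loop exits with vertical distance $<1$. The one branch to rule out is the one where \texttt{compare} stays \emph{equal} after two successive \texttt{move}$(N,1)$'s: against a stationary partner, the first move preserving the distance forces the agent to be exactly $1/2$ South of the partner, hence exactly $1/2$ North after the move, so the second \texttt{move}$(N,1)$ strictly increases the distance --- this branch is unreachable, and $sim$ correctly remains \emph{false}. Then \texttt{HorizontalApproach} runs the branch for $sim$ false and, by the same monotone-descent argument in the horizontal direction, sweeps toward the partner; since the vertical distance is already $<1$ the two discs overlap vertically, so the agent cannot cross the partner's meridian without the discs touching, and a meeting occurs after $O(D)$ further travel.

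\textbf{Simultaneous appearance.} Now both agents read a positive value and run \texttt{VerticalApproach}; a lockstep induction shows that, while the bits of their transformed labels processed so far agree, they make identical moves, hence preserve their relative position and their (unchanging) distance, so \texttt{compare} stays \emph{equal}. In particular this holds through the two initial \texttt{move}$(N,1)$'s, so both set $sim\leftarrow true$ and enter \texttt{Dance}, and it holds in \texttt{Dance} exactly while the processed bits agree. Since transformed labels have equal length $\lambda$ and differ somewhere (their defining property), let $j$ be the first index where they differ. At step $i=j$ the agents move in \emph{opposite} vertical directions by $1/2^{j}$; this changes the Euclidean distance --- so \texttt{compare}$\neq$\emph{equal} --- unless one is exactly in the configuration flagged in the text (the $1$-bit agent $1/2^{j}$ South of the $0$-bit agent), and in that case the \emph{repeated} move inside \texttt{Dance} makes the distance strictly increase. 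Thus \texttt{Dance} terminates precisely at the first differing bit, symmetry is broken, and both agents recover $j$. The post-\texttt{Dance} code then, depending on \texttt{compare} and on the bit $c_j$, optionally backtracks by $1/2^{j}$ and has the two agents approach each other vertically in steps of $1/4$ (halved because now both move); the same descent argument closes the vertical distance to $<1$. The distance never exceeds $D+O(1)$ during \texttt{VerticalApproach} --- the \texttt{Dance} moves sum to $O(1)$ and the initial moves are $O(1)$ --- so the travel is $O(D)$. Finally, in \texttt{HorizontalApproach} with $sim=true$ the two agents move toward each other horizontally (one East, one West according to $c_j$) in steps of $1$, and, their vertical distance being $<1$, they cannot pass without touching, so a meeting occurs after $O(D)$ more travel.

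\textbf{Where the difficulty lies.} The bookkeeping of the case split and the threshold computations for individual moves are routine. The two points I expect to need real care are: (a) the ``no overshoot'' arguments in \texttt{HorizontalApproach} (and in the final \texttt{GetCloser} calls) --- one must verify that the steps are small enough, relative to the vertical distance guaranteed by \texttt{VerticalApproach}, that the sweeping agent necessarily lands in, or passes into, the region where the two discs touch, instead of stepping over it; this may require proving a sharper bound than ``$<1$'' on the vertical distance at the end of \texttt{VerticalApproach}; and (b) the verification that \texttt{Dance} breaks symmetry exactly at the first differing bit and not before, where the doubled moves and the exceptional configuration must be handled carefully. The $O(D)$ time bound then rests on the single structural observation that every \texttt{GetCloser} loop is a strict descent of the Euclidean distance --- so it can ``waste'' at most $O(D)$ of travel --- together with the geometric shrinking that keeps the cost of \texttt{Dance} at $O(1)$.
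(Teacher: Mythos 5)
Your plan is correct and follows essentially the same route as the paper's proof: the same case split (inert earlier agent vs.\ simultaneous start), the same identification of the exceptional exactly-$1/2$ (resp.\ $1/2^{j}$) configurations that motivate the doubled probe moves, the same lockstep-until-first-differing-bit analysis of {\tt Dance} with its $O(1)$ geometric-series cost, and the same ``vertical distance $<1$ prevents passing horizontally without touching'' argument, with the $O(D)$ bound obtained by the same travel accounting. The two points you flag as needing care are exactly the ones the paper's own proof handles (at a comparable level of detail), so nothing essential is missing.
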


\begin{proof}
Consider two cases.

Case 1. Agents do not start simultaneously.

In this case the first agent, call it $b$, gets the initial value 0 of its sensor, and hence stays inert forever. The other agent, call it $a$, moves North by 1.
If its distance from $b$ decreased, agent $b$ was North of $a$ before the move, and it is either still North of $a$ after the move, or it is South of $a$, at distance less than 1/2 in the vertical direction.
In both cases, Procedure {\tt GetCloser} $(N,\frac{1}{2})$ will get the agent $a$ at distance at most 1 from $b$ in the vertical direction. If the distance of $a$ from $b$ increased after the move, then either
$a$ was South of $b$ at distance less than 1/2 in the vertical direction before the move and is North of $b$ after the move, or $a$ was North of $b$ already before the move. In the first case, going back
(i.e., South by 1) will get agent $a$ again South of $b$ at distance less than 1/2 in the vertical direction. Procedure {\tt GetCloser} $(S,\frac{1}{2})$ will result in one step of length 1/2 South, and hence
agent $a$ will finish Procedure {\tt VerticalApproach} at distance less than 1 from agent $b$ in the vertical direction. In the second case, after the move back, agent $a$ will be still North of $b$, and
hence Procedure  {\tt GetCloser} $(S,\frac{1}{2})$ will end up with the agents at vertical distance less than 1.

If after the first move the distance between the agents did not change, agent $b$ was North of agent $a$,
at vertical distance 1/2, before the move of agent $a$, and after the move it is South of agent $a$,
at vertical distance 1/2. Another move North by 1 increases the distance between the agents. Then agent $a$ goes back, and
Procedure  {\tt GetCloser} $(S,\frac{1}{2})$ will end up after two moves, with the agents at vertical distance exactly 1/2.

Hence agent $a$ finishes Procedure {\tt VerticalApproach} at distance at most 1 from agent $b$ in the vertical direction.
Similarly, Procedure {\tt HorizontalApproach} (performed with $sim =$ {\em false})  will get agent $a$ at distance at most 1 from agent $b$ in the horizontal direction, by which time the meeting will occur.
Let $x$ be the initial distance between the agents in the vertical direction, and let $y$ be the initial distance between the agents in the horizontal direction.
Agent $a$ spent time at most $x+4$ moving in the vertical direction, and spent time at most $y+4$ moving in the horizontal direction, hence the time between the start of agent $a$ and the meeting is at most
$x+y+8 \in O(x+y)=O(D)$, where $D$ is the initial distance between the agents.

Case 2. Agents start simultaneously.

The first reading of the sensor is positive for each agent. Since agents start simultaneously, each of them makes a move North by 1, the value of $compare$ is set to $equal$, and they make a second move North.
After these two moves, both agents realize that the start was simultaneous. They set variable $sim$ to $true$ and perform Procedure {\tt Dance}.
Let $j$ be the index of the first bit where the transformed labels of the agents differ.
This procedure ends for both of them after processing the $j$th bit, with the value of variable $compare$ different from $equal$.
If $compare = smaller$, then the agent whose  $j$-th bit  is 1 was South of the other agent {\em before} the last move.
The agents backtrack by a distance of $\frac{1}{2^j}$, in order to restore the situation before the last move, and then approach each other: the agent whose $j$-th bit  is 1 going North, and the agent whose $j$-th bit  is 0 going South. If $compare = larger$,
then the agent whose  $j$-th bit  is 1 is North of the other agent {\em after} the last move. Now there is no need of backtracking, and the agents
 approach each other as follows: the agent whose $j$-th bit  is 1 goes South, and the agent whose $j$-th bit  is 0 goes North.
Similarly as in case 1, both agents finish Procedure {\tt VerticalApproach} at distance at most 1 from each other in the vertical direction.
Procedure {\tt HorizontalApproach} (performed with $sim = true$)  will get both agents at distance at most 1 from each other in the horizontal direction, by which time the meeting will occur.

It remains to estimate the time of meeting. Agents spend time 2 before calling Procedure {\tt Dance}. Procedure {\tt Dance} takes time at most 1, due to the fact that the time of processing a bit
is twice shorter than that of processing the preceding bit.
Let $x$ be the initial distance between the agents in the vertical direction, and let $y$ be the initial distance between the agents in the horizontal direction. The vertical approach
after Procedure {\tt Dance} takes time less than $x+1$, and the horizontal approach takes time less than $y+1$. Hence the total time in the case of simultaneous start is less than
$x+y+5 \in O(x+y)=O(D)$, where $D$ is the initial distance between the agents.
\end{proof}

\section{The binary model}

In this section we present an algorithm that accomplishes the meeting in the binary model in time $O(\rho \log L)$, assuming that agents are initially at distance smaller than $\rho$, i.e., that they can initially sense each other, when both agents are already in the plane. We also show a matching lower bound on the time of the meeting, for some initial positions at distance smaller than $\rho$ and for some labels of the agents.

The elementary instructions in the binary model are $check(C)$ and $move(card, x)$. The instruction $check(C)$ results in reading a single bit into the variable $C$.
This bit is 1, if the other agent is at distance less than $\rho$, and it is 0 otherwise (i.e., if the other agent is either still absent from the plane, or if it is at distance at least $\rho$).
The instruction  $move(card, x)$, where $card$ is one of the cardinal directions ($N,E,S,W$)
and $x$ is a positive real, is identical as in the monotone model: it results in moving the agent in the direction $card$ during time $x$.

At a high level, the idea of the algorithm is the following. In the beginning the agent reads its sensor. If its value is 0, this means that the other agent is not yet in the plane, which enables the agent to break symmetry
as in the monotone model.
The agent stays inert forever and will be eventually found by the other agent.  As explained below, this will be done in a much different way than in the monotone model, as sensing of the other agent is less precise.

If the initial readings of sensors of both agents are 1, this means that both of them are placed in the plane simultaneously. In this case, they break symmetry  using their transformed labels,
which are different by assumption. Initially, an agent has no way of deciding if it was placed in the plane later, or if both agents were placed in the plane simultaneously. Hence, when the bit initially read in $check(C)$
is 1, the agent performs actions that will enable it to {\em lose contact} with the other agent (i.e., to get the reading 0 in $check(C)$), regardless of which of these situations occurs. Losing contact will
always happen when one agent moves and the other stays. This will enable the agents to break symmetry: the agent during whose move the contact was lost will find the other agent that becomes inert (in the case of non-simultaneous start, the agent that will perform the finding
is the later agent, and the earlier agent is inert from the start).
Once symmetry is broken, the moving agent accomplishes the meeting using the binary readings of its sensor and geometric properties of the plane.

We now give a detailed description of the algorithm. The aim of the first procedure is losing contact between the agents, i.e., having both of them (in the case of simultaneous start), or the later agent
(in the case of non-simultaneous start) get the reading 0 in $check(C)$). This is done by going North or staying inert, for increasing periods of time, according to the bits of the transformed label:
when the bit is 1, the agent moves, when it is 0, it stays inert.
It will be proved that agents eventually get at distance at least $\rho$, which they will realize by reading their sensors. Losing contact occurs when one agent moves North and the other agent is inert.
This happens when agents, in the case of simultaneous start (or only the later agent, otherwise) process the $j$th bit of their transformed label.
When contact is lost, the agent during whose move this occurred, i.e., the agent for which $c_j=1$,  goes back South, this time by small steps, until contact is regained (the reading of $C$ is 1 again). This is done to determine the point when contact has been lost, with sufficient precision.
The procedure is executed by an agent with label $\ell$, and called when the value of $C$ is 1.

\begin{center}
\fbox{
\begin{minipage}{7cm}

{\bf Procedure} {\tt LoseContact} $(C)$ \\

$T(\ell) \leftarrow (c_1 \dots c_{\lambda})$\\
$d \leftarrow 1$\\
$leading \leftarrow$ {\em false}\\
{\bf while} $C=1$ {\bf do}\\
\hspace*{1cm}$i \leftarrow 1$\\
\hspace*{1cm}{\bf while} ($C=1$ {\bf and} $i<\lambda$) {\bf do}\\
\hspace*{2cm}{\bf if} $c_i=1$ {\bf then} $move(N,d)$\\
\hspace*{2cm}{\bf else} stay inert for time $d$\\
\hspace*{2cm}$i \leftarrow i+1$\\
\hspace*{2cm}$check(C)$\\
\hspace*{1cm}$d \leftarrow 2d$\\
$j \leftarrow i-1$\\
{\bf if} $c_j=1$ {\bf then}\\
\hspace*{1cm}$leading \leftarrow true$\\
\hspace*{1cm}{\bf while} $C=0$ {\bf do}\\
\hspace*{2cm}$move(S,\frac{1}{2})$\\
\hspace*{2cm}$check(C)$

\end{minipage}
}
\end{center}

The second procedure is based on the following observation. Consider two points in the plane, $X$ and $A$, where $A$ is North of $X$ (in the sense defined in Section 2).
Suppose that the distance between $A$ and $X$ is $\rho$. Let $p$ be the vertical line containing point $A$, and let $B$ be the other point of line $p$ at distance $\rho$ from $X$.
(This includes the case when $A$ and $B$ coincide). Let $Z$ be the midpoint of the segment $AB$. Then $Z$ is on the same horizontal line as $X$ because the triangle $XAB$ is isoceles (cf. Fig. 1)\\

\begin{figure}[h] %on ouvre l'environnement figure
\begin{center}
\includegraphics[scale=0.6]{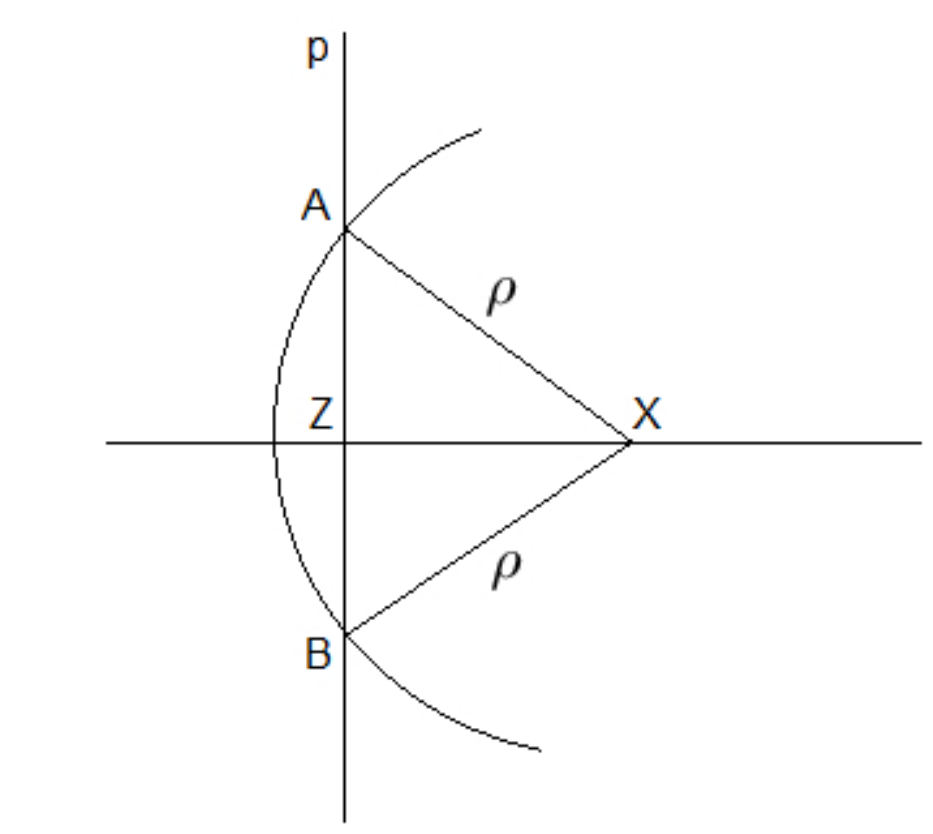} %ou image.png, .jpeg etc.
\end{center}
\caption{Geometric setting for procedure {\tt TriangleSearch} } %la l\'egende
\label{arbre non symetrique} %l'\'etiquette pour faire r\'ef\'erence \`a cette image
\end{figure} %on ferme l'environnement figure

The above observation can be used to construct the next procedure, called after executing Procedure {\tt LoseContact}, when (the center of) one of the agents, call it $a$, approximates a point $A$ North of the inert agent,
 with center $X$, such that the distance between $A$ and $X$ is $\rho$. Agent $a$ starts the procedure with the reading 1 of its sensor. It goes South with steps of length 1/2 (counted in the counter $t$), until the reading of its sensor becomes 0.
At this time its center approximates the other point $B$ at distance $\rho$ from $X$,  on the vertical line along which it travelled. Then the agent goes back (i.e., North) on this vertical line,
at distance $\lceil t/2 \rceil \cdot \frac{1}{2}$. Upon completing this move it reaches a point
which approximates the midpoint $Z$ of the segment $AB$. Now it goes horizontally, trying directions East and West in increasing leaps whose lengths double at each time, until meeting the inert agent. Since steps used by the agent, when it moved vertically, were sufficiently small, the approximations are good enough for the meeting to eventually occur.

\begin{center}
\fbox{
\begin{minipage}{7cm}

{\bf Procedure} {\tt TriangleSearch} $(C)$ \\

{\bf if} $leading=true$ {\bf then}\\
\hspace*{1cm}$t \leftarrow 0$\\
\hspace*{1cm}{\bf while} $C=1$ {\bf do}\\
\hspace*{2cm}$move(S,\frac{1}{2})$\\
\hspace*{2cm}$check(C)$\\
\hspace*{2cm}$t \leftarrow t+1$\\
\hspace*{1cm}$move(N, \lceil t/2 \rceil \cdot \frac{1}{2})$\\
\hspace*{1cm}$d \leftarrow 1$\\
\hspace*{1cm}{\bf repeat until} meeting\\
\hspace*{2cm}$move(E,d)$\\
\hspace*{2cm}$move(W,2d)$\\
\hspace*{2cm}$move(E,d)$\\
\hspace*{2cm}$d \leftarrow 2d$

\end{minipage}
}
\end{center}

Now our main algorithm for the binary model can be formulated succinctly as follows.
It is executed by an agent with  label $\ell$ (that intervenes in Procedure {\tt LoseContact}), with the understanding that when agents touch (i.e., get at distance 1), the algorithm is interrupted, as meeting is then accomplished.
We will prove that this will always occur by the end of the execution of the algorithm.

\begin{center}
\fbox{
\begin{minipage}{7cm}

{\bf Algorithm} {\tt MeetingWithBinarySensor}\\

$check(C)$\\
{\bf if} $C=0$ {\bf then} stay inert forever\\
{\bf else}\\
\hspace*{1cm} {\tt LoseContact} $(C)$\\
\hspace*{1cm} {\tt TriangleSearch} $(C)$

\end{minipage}
}
\end{center}

\vspace*{0.5cm}

\begin{theorem}
The meeting of two agents that are placed in the plane at an initial distance less than $\rho$ and execute Algorithm {\tt MeetingWithBinarySensor} in the binary model, occurs by the end of the execution of this algorithm.
The meeting occurs in time $O(\rho \log L)$ after the appearance of the later agent.
\end{theorem}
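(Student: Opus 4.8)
The plan is to follow the trajectory of whichever agent ends \texttt{LoseContact} with \texttt{leading} set to \emph{true}, and to show that it reaches the other agent -- inert by then -- within time $O(\rho\log L)$ counted from the appearance of the later agent. I would split into two cases according to whether the starts are simultaneous.

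\emph{Non-simultaneous start.} The earlier agent reads $C=0$ and stays inert forever; the later agent $a$ reads $C=1$ (here we use that the initial distance is below $\rho$) and enters \texttt{LoseContact}. Since the other agent never moves and every move of \texttt{LoseContact} is northward, the distance changes only during a northward move of $a$; because the step length doubles from pass to pass, the cumulative northward displacement of $a$ grows without bound, so $a$ eventually reaches distance $\ge\rho$ and detects it, stopping inside a move with $c_j=1$. Thus $a$ sets \texttt{leading} and walks back south in $\tfrac12$-steps until it regains contact; at that point $a$ lies on the vertical line $p$ through its start, north of the inert center $X$, within $\tfrac12$ below the point $A$ of $p$ at distance exactly $\rho$ from $X$ -- precisely the precondition of \texttt{TriangleSearch}.

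\emph{Simultaneous start.} Both agents read $C=1$ and run \texttt{LoseContact} in lockstep (every sub-step lasts the same time $d$ for both). While the processed bits of their transformed labels agree, the two move identically, stay at the initial distance, and keep reading $C=1$; the relative position can first change at the earliest bit where the transformed labels differ, which is processed already in the first pass. The correctness facts to establish are: (i) contact is eventually lost; (ii) exactly one agent becomes \texttt{leading}; (iii) that agent ends up strictly north of the other. For (i), during a pass with step $d$ the relative north offset equals its value $v$ at the pass's start plus $d$ times a partial sum of the per-bit contributions (each in $\{-1,0,+1\}$), and the partial sum up to the first differing bit has absolute value $1$; since $|v|$ stays bounded (or grows, which only helps), for $d$ large enough this offset has absolute value $\ge\rho$, so the distance exceeds $\rho$. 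For (ii), let bit $j$ be the one during whose sub-step contact is first lost: just before it the distance was $<\rho$ and just after it is $\ge\rho$, so the two agents did \emph{not} move identically during it, i.e.\ exactly one has $c_j=1$; that one becomes \texttt{leading}, while the other executes empty \texttt{if}-blocks, terminates, and stays put as the inert agent. For (iii), the east offset $E_0<\rho$ is never touched and the leading agent moved north by $d$; if it were still weakly south afterwards, then ``distance $\ge\rho$'' would force $|\text{new north offset}|\ge\sqrt{\rho^2-E_0^2}$, so $|\text{old north offset}|=|\text{new north offset}|+d>\sqrt{\rho^2-E_0^2}$, contradicting ``old distance $<\rho$''. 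Hence, after walking south to regain contact, the leading agent is again in the precondition of \texttt{TriangleSearch}, the other agent inert at $X$ on its line $p$.

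It then remains to verify \texttt{TriangleSearch} and to bound all times. Let $A,B$ be the two points of $p$ at distance $\rho$ from $X$ ($A$ north of $B$) and $Z$ their midpoint, which lies on the horizontal line of $X$ by the isosceles-triangle observation. The agent starts within $\tfrac12$ below $A$, moves south in $\tfrac12$-steps counting $t$ of them until it passes $B$ (hence stops within $\tfrac12$ below $B$), then moves north by $\lceil t/2\rceil\cdot\tfrac12$. Writing the final offset from $Z$ in terms of the two overshoots, each at most $\tfrac12$, and the rounding error $\lceil t/2\rceil-t/2\in\{0,\tfrac12\}$ shows this offset has absolute value strictly below $1$; so during the expanding east--west sweep (leaps $d,2d,d$ with $d$ doubling), as soon as $d$ exceeds the horizontal offset between $p$ and $X$ -- which is less than $\rho$ -- the agent's horizontal coordinate passes that of $X$ while its vertical offset is $<1$, so the discs touch. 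For the time bound: in both cases contact is lost by the pass with $d=2^{P-1}=O(\rho)$, so all passes of \texttt{LoseContact} take $\sum_{i\le P}\lambda\,2^{i-1}=O(\lambda\rho)=O(\rho\log L)$; the south walk back to contact, the two phases of \texttt{TriangleSearch}, and the east--west sweep each traverse a length $O(\rho)$, hence each take time $O(\rho)$; summing gives $O(\rho\log L)$. The main obstacle is the simultaneous-start analysis -- establishing (i)--(iii), and in particular that contact is forced even in the delicate sub-case where the net per-pass relative displacement is zero and only the within-pass excursion diverges; the \texttt{TriangleSearch} error estimate, though essential, is routine.
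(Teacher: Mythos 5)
Your proposal follows essentially the same route as the paper's proof: show that \texttt{LoseContact} forces the distance to reach $\rho$ (in the simultaneous case via the first differing bit of the transformed labels once the step $d$ exceeds $2\rho$), that exactly one agent becomes leading and ends within $1/2$ of the point $A$, and that \texttt{TriangleSearch} places it within distance $1$ of the midpoint $Z$ on the inert agent's horizontal line so that the doubling east--west sweep achieves the meeting, with the same $O(\rho\lambda)$ accounting. Your treatment is somewhat more explicit than the paper's on why the leading agent ends up North of the other and on the error bookkeeping in \texttt{TriangleSearch}, but the decomposition and key ideas coincide.
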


\begin{proof}
We first prove that agents get at distance at least $\rho$ during the execution of Procedure {\tt LoseContact}, i.e., that the ``{\bf while} $C=1$'' loop is eventually exited.
In the case of non-simultaneous start this is straightforward: when $d$ exceeds $\rho$, the active agent that always goes North gets at distance larger than $\rho$ in the vertical direction, at which point the distance between the agents exceeds $\rho$. If agents start simultaneously, then we show that they must get at distance at least $\rho$ at the latest at the end of the execution of the loop for the smallest $d$ exceeding $2\rho$.
Indeed, if this did not happen before, the agents are at distance smaller than $\rho$ in the vertical direction, at the beginning of this execution of the loop. Consider the first index $i$ for which the transformed labels
of the agents differ. Before processing this bit in this execution of the loop the agents are still at distance smaller than $\rho$ in the vertical direction. When processing this bit, one of the agents stays inert, and the other
one goes North at distance larger than $2\rho$, and hence gets at distance larger than $\rho$ from the other agent.

Moreover, agents realize that they got at distance at least $\rho$ at the reading of $check(C)$
which occurs after a move of one agent and a period of staying still of the other agent.
Consequently, symmetry between the agents is broken at this time. Call the agent after whose move this occurred, the {\em leading} agent.
This is the agent whose variable $leading$ is set to $true$ upon exiting the ``{\bf while} $C=1$'' loop.
(Notice that in the case of non-simultaneous start, the leading agent is the later one). This agent knows that it became leading.
The rest of the algorithm is executed only by the leading agent.

Let $p$ be the vertical line along which the leading agent travels during the execution of Procedure {\tt LoseContact}, and let $X$ be the center of the other agent.
Denote by $A$ the point on $p$ at distance $\rho$ from $X$, North of $X$, and denote by $B$ the point on $p$ at distance $\rho$ from $X$, South of $X$.
Upon completion of Procedure {\tt LoseContact} by the leading agent, this agent is on the line $p$ at distance at most $1/2$ from $A$. At the end of the execution of the {\bf while} loop of Procedure
{\tt TriangleSearch}, the agent is on the line $p$ at distance at most $1/2$ from $B$. Let $C$ be the point that the agent reaches after going at distance $\lceil t/2 \rceil \cdot \frac{1}{2}$ North on line $p$.
Point $C$ is at distance at most 1 from the center $Z$ of the segment $AB$. During the {\bf repeat} loop in Procedure {\tt TriangleSearch} the agent moves on the horizontal line containing point $C$, making
increasing leaps East and West. Hence it must eventually touch the other agent waiting in point $X$.

It remains to estimate the time of Algorithm {\tt MeetingWithBinarySensor}. The turn of the ``{\bf while} $C=1$'' loop in Procedure {\tt LoseContact}, for every parameter value $d$, takes time $O(d\lambda)$.
Since the parameter is doubled at each turn, and it is $\Theta(\rho)$ at the last turn, the entire loop takes time  $O(\rho\lambda)$. The remainder of Procedure {\tt LoseContact} is going South after the end of the loop.
This takes time $O(\rho)$. Moving vertically along line $p$ during Procedure {\tt TriangleSearch} takes time $O(\rho)$ as well. Horizontal moves ending this procedure must cover distance at most $\rho$, and they are
doubling every time, hence these moves also take time $O(\rho)$. It follows that the entire time of the algorithm is $O(\rho\lambda)=O(\rho \log L)$.
\end{proof}

Our last result shows that the complexity of Algorithm {\tt MeetingWithBinarySensor} cannot be improved in general, in the binary model.

\begin{theorem}
There is no deterministic algorithm for the binary model, guaranteeing meeting of any two agents starting at distance less than $\rho$ in the plane, and working in time $o(\rho \log L)$.
\end{theorem}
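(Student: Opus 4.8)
The plan is to prove this by an adversary argument, showing that every deterministic algorithm $\mathcal A$ for the binary model needs time $\Omega(\rho\log L)$ on some instance with initial distance less than $\rho$. Suppose for contradiction that $\mathcal A$ guarantees meeting within time $f(\rho,L)=o(\rho\log L)$ from the start of the later agent; fix $\rho$ and $L$ large enough that $f(\rho,L)\le c\rho\lambda$, where $\lambda=\lceil\log L\rceil$ and $c>0$ is a small constant chosen at the end. We restrict the adversary to instances in which both agents are placed \emph{simultaneously} at Euclidean distance $\rho/2$, so that they sense each other from the start. The basic quantitative ingredient is a \emph{cost lemma}: starting from any configuration in which the inter-agent distance lies in $[\rho/3,2\rho/3]$, the algorithm cannot cause a sensor reading of $0$ for either agent before time $\rho/6$ has elapsed, since a reading $0$ requires distance at least $\rho$ and the two agents move with combined speed at most $2$; likewise, from a configuration at distance at least $3\rho/2$, reaching distance at most $\rho$ (which any later meeting requires) costs time at least $\rho/4$. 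Hence every ``long-range'' change of the contact status is expensive, costing $\Omega(\rho)$.

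The adversary builds a bad instance in $\Theta(\log L)$ \emph{phases}, maintaining after phase $k$ (ending at time $t_k=\Theta(k\rho)$) a set $\mathcal F_k$ of \emph{surviving instances} with the following properties: (i) in every instance of $\mathcal F_k$, up to time $t_k$ the two agents have recorded identical sequences of sensor readings, so --- each agent's sequence of moves being a function only of its own label and of the readings it has seen --- the two trajectories realized in different instances of $\mathcal F_k$ differ only through the difference of two \emph{fixed}, label-indexed trajectories; (ii) in no instance of $\mathcal F_k$ has a meeting yet occurred; and (iii) the current configuration in each surviving instance again has inter-agent distance in $[\rho/3,2\rho/3]$. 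Initially $\mathcal F_0$ uses one agent's label fixed and the other ranging over an interval of $\Theta(L)$ labels whose transformed labels share a long common prefix, all placed at distance $\rho/2$, so $|\mathcal F_0|=\Theta(L)$.

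In a phase, the adversary advances time by $\Theta(\rho)$. By the cost lemma together with property (iii), during such a phase there is at most one moment at which a sensor reading can be \emph{forced} to differ across instances of $\mathcal F_k$ (namely, at most one reading taken while the distance is close to $\rho$, and reaching such a configuration from $[\rho/3,2\rho/3]$ consumes $\Omega(\rho)$ time). At that single bit-valued decision point the adversary fixes the reading to the value consistent with at least half of the surviving instances, simultaneously pinning down the geometric placements --- using the freedom still left in the not-yet-consulted label bits, which is exactly why $\mathcal F_0$ is chosen with a long common prefix --- so that property (iii) is restored for the kept instances; it also discards any instance in which a meeting would otherwise have occurred, which costs at most a further constant factor. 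Thus $|\mathcal F_{k+1}|\ge |\mathcal F_k|/C$ for an absolute constant $C$, while $t_{k+1}-t_k=\Theta(\rho)$. Iterating $k^*=\Theta(\log L)$ times leaves $|\mathcal F_{k^*}|\ge 2$ and total elapsed time $t_{k^*}=\Theta(\rho\log L)$. In any surviving instance no meeting has occurred by time $t_{k^*}$, hence none by the strictly smaller time $f(\rho,L)\le c\rho\lambda$ once $c$ is chosen below the implied constant --- contradicting the correctness of $\mathcal A$. Therefore $f(\rho,L)=\Omega(\rho\log L)$.

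The hard part will be the phase step: (1) proving rigorously that in the plane, with four move directions and a two-dimensional distance, the adversary can always \emph{re-centre} the configuration to distance $\Theta(\rho)$ after each decision point using only the freedom of the yet-unresolved label bits, so that genuinely informative readings really are $\Omega(\rho)$ apart and only $O(\log L)$ of them can fit into time $f(\rho,L)$; and (2) ruling out that a clever algorithm keeps the distance near $\rho$ in order to harvest many cheap contact flips --- the cost lemma forbids this only when combined with the re-centring step, and one must also treat the non-simultaneous-start case, where the later agent does not know whether the other is inert. I expect the cleanest route is to first establish the bound against an adversary that confines both agents to a common vertical line, where ``re-centring'' is transparent (a one-dimensional displacement of the partner), and then to observe that the freedom to move in the plane gives the algorithm no additional power against such a restricted adversary.
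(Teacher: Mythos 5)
Your high-level strategy --- start the agents simultaneously at distance below $\rho$, observe that as long as the distance stays below $\rho$ every sensor reading is $1$ so each agent's behaviour is a function of its label alone, and then run a counting argument showing that only $O(\log L)$ ``informative events'' fit into time $o(\rho\log L)$ because each costs $\Omega(\rho)$ of travel --- is the right one, and the cost observation (combined speed $2$, so moving the inter-agent distance from $\Theta(\rho)$ to $\rho$ or to $1$ takes $\Omega(\rho)$ time) is exactly the quantitative engine the paper uses. But the proof as written has a genuine gap, and it is precisely the step you flag as ``the hard part'': the re-centring of the configuration to distance $\Theta(\rho)$ after each decision point. The adversary in this model has no such power. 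Its only choices are the two labels and the two initial positions, made once at time $0$; thereafter the trajectory of each agent is completely determined (by its label, while all readings are $1$), and the inter-agent distance at every time is whatever the difference of the two fixed trajectories makes it. There is no ``freedom left in the not-yet-consulted label bits'' to pin down geometric placements mid-run: the label is consulted by the algorithm, not revealed adaptively by the adversary, and an algorithm may read all of its label at time $0$. Consequently your invariant (iii) cannot be maintained by adversary moves, and without it the claim that each phase contains at most one decision point collapses.

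The fix, which is how the paper argues, is to remove adaptivity entirely. Since every reading is $1$ while the distance stays below $\rho$, the \emph{entire} trajectory of an agent on the all-ones reading sequence is a single fixed curve $\gamma_\ell$ depending only on its label $\ell$. Tile the plane by squares of side $\rho/5$, cut time into segments of length $\rho/6$, and record for each label the sequence of relative tile-displacements at segment boundaries: a string over a $9$-letter alphabet of length $k=O(c\lambda)$. If $9^k<L$, two labels share the same string; start those two agents in tiles on a common horizontal line separated by one tile. By induction on the segment index, they remain in such ``conjugate'' tiles at every segment boundary, hence stay at distance at most $\sqrt{10}\rho/5+2\rho/6<\rho$ (so the all-ones assumption is self-consistent and the trajectories really are $\gamma_{\ell}$) and at distance at least $\rho/5-\rho/6=\rho/30>1$ (so they never meet). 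This pigeonhole over coarse-grained whole trajectories replaces your phase-by-phase surviving-instance set and makes the re-centring problem disappear: the conjugate-tile geometry keeps the distance in the window $[\rho/30,\rho)$ automatically. Your one-dimensional reduction at the end would also need justification, but it becomes unnecessary under this scheme.
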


\begin{proof}
Consider a deterministic algorithm $\cal A$ for the binary model. Assume that $\rho>30$.
Define a tiling of the plane into pairwise disjoint squares of size $\rho/5$. Each tile includes its North and East edges.
For a fixed tile 0, enumerate the 8 neighboring tiles by numbers $1,\dots ,8$, starting with the tile North of tile 0 and going clockwise.
Let $A$ and $B$ be two tiles on the same horizontal line, separated by one tile. Call such tiles {\em conjugate}. Place one agent in any point of $A$, the other agent in any point of $B$, and start them simultaneously at time 0.

Suppose that algorithm $\cal A$ works in time at most $c\rho \lambda$, where $\lambda = \lceil \log L \rceil$ and $c<\frac{1}{24\log 9}$ is a constant.
Divide the time into consecutive segments of length $\rho/6$. If the agent is in some tile at the beginning of a time segment, then at the end of this segment it is either in the same or in one of the neighboring tiles.
The {\em behavior pattern} of an agent is the sequence $(a_1,\dots ,a_k)$ with terms $0,1,\dots ,8$, defined as follows.
If at the beginning of the $i$th segment the agent is in some tile 0, then $a_i=j$ if and only if the agent is in tile $j$ at the end of the segment. Since the algorithm works in time at most $c\rho \lambda$,
the length of the behavior pattern is at most $k=\lceil 6c\lambda \rceil$. Since $c<\frac{1}{24\log 9}$, we have $k<\frac{\log L}{\log 9}$, for sufficiently large $L$, and hence $9^k<L$.

As long as the agents are at distance less than $\rho$, and hence the reading of their sensors is always 1,
the actions of each agent depend only on its label. Agents start in conjugate tiles.
Consequently they start at distance at most $\frac{\sqrt{10}\rho}{5}$. Since time segments are of length $\rho/6$, the distance between the agents during the first time segment is always at most
$\frac{\sqrt{10}\rho}{5}+\frac{2\rho}{6} <\rho$. Hence the actions of each agent during the first time segment depend only on its label. There is a number in the set $\{0,1,\dots ,8\}$, such that for a set $S_1$ of at least $L/9$ labels,
if agents with distinct labels from this set start respectively in conjugate tiles that are given number 0, then
the tile numbers in which agents finish the first segment are the same. Hence agents with such labels finish the first time segment in conjugate tiles as well. Consequently, the distance between the agents during the second time segment is always at most
$\frac{\sqrt{10}\rho}{5}+\frac{2\rho}{6} <\rho$. Hence the behavior of each agent during the second time segment depends only on its label.
There exists a set $S_2 \subseteq S_1$ of size at least $|S_1|/9$, such that agents with labels in this set finish the second time segment in conjugate tiles as well.

Since $9^k<L$, by induction on the index of the time segment,  there are at least two labels such that agents starting in tiles $A$ and $B$ with these labels have the same behavior pattern.
Assign these labels to agents starting in tiles $A$ and $B$.
It follows that these agents are in conjugate tiles at the end of each time segment. Since time segments are of length $\rho/6$, we show that agents are always at distance
at least $\frac{\rho}{5}-\frac{\rho}{6}$. Indeed, suppose that during a time segment agents get at distance smaller than $\frac{\rho}{5}-\frac{\rho}{6}$. This could occur only
when both agents are inside the tile separating them at the beginning of the segment. However, during this time segment, each agent can penetrate this separating tile only at distance at most
$\rho/12$, because at the end of the segment they have to be again separated by this tile. Since at the beginning of the time segment agents were at distance at least $\frac{\rho}{5}$, this distance could not
decrease by more than $\frac{\rho}{6}$ during the time segment, which results in the distance at least $\frac{\rho}{5}-\frac{\rho}{6}=\frac{\rho}{30}>1$ at all times. Hence agents cannot meet.
This contradiction implies that the time of the algorithm must be larger than $c\rho \log L$, for sufficiently large $\rho$ and $L$,
which concludes the proof.
\end{proof}

We conclude this section with the observation that if agents start at distance $\alpha\rho$, for some constant $\alpha >1$ in the binary model, then sniffing does not help, i.e., the worst-case optimal meeting time is of the same
order of magnitude as without any sniffing ability.
 Indeed, consider two agents starting at distance $D=\alpha\rho$, for some constant $\alpha >1$. Let $\epsilon=\frac{\alpha -1}{2}$.
 In order to accomplish the meeting, the agents must first get
at distance $\rho(1+\epsilon)$. (It is enough to consider $\rho$ sufficiently large that $\rho(1+\epsilon)>1$).  This partial task must be accomplished in the model without any feedback, as the reading of sensors of both agents will be always 0 when it is performed. The partial task is thus equivalent to the task of approach \cite{DP} when agents start at distance $D'=1+\epsilon\rho$.
Hence the original task cannot be completed faster than in time $Opt(D')$, which denotes the optimal time of approach without sensors, when agents start at the initial distance $D'$.
The exact order of magnitude of  this optimal approach time is not known, but it is
 polynomial in $D'$ and in $\log L$. (It follows from \cite{DP} that this is the case even if agents are allowed some degree of asynchrony.) Since $D=\Theta(D')$, it follows that $Opt(D)=\Theta(Opt(D'))$. Consequently, the worst-case optimal meeting time
in the binary model with sniffing,  when agents start at distance $D=\alpha\rho$, for some constant $\alpha >1$, is of the same order of magnitude as if they start at the same initial distance and have no sensors.

\section{Conclusion}

We provided optimal algorithms for the task of meeting of two agents equipped with sniffing sensors, in two models of accuracy of these sensors.
In the monotone model it is assumed that sensors are perfectly accurate, i.e., they can  notice any change of distance between the agents, although they cannot measure the distance itself, nor recognize the direction in which the other agent is located. In the binary model, the sensor can only tell the agent if the other agent is close or far, for some threshold of closeness. We showed a separation between the two models:
while in the monotone model meeting is guaranteed in time proportional to the initial distance between the agents, in the binary model we showed that optimal meeting time is $\Theta(\rho \log L)$, where
$\rho$ is the sensing threshold and $L$ is the size of the label space.

In both models we assume that both agents travel at the same speed 1,
and this assumption is heavily used in our algorithms and their analysis. It is an interesting open problem how our results would change if agents moved in an asynchronous way, or at least were allowed the same
degree of asynchrony as in \cite{DP}, i.e., each agent moved at a constant speed, but possibly different from the other agent.

%%%%%%%%%%%%%%%%%%%%%%%%%%%%%%%%%%%%%%%%%%%%%%%%%%%%%%%%%%%
\bibliographystyle{plain}

%%%%%%%%%%%%%%%%%%%%%%%%%%%%%%%%%%%%%%%%%%%%%%%%%%%%%%%%%%%

\end{document}